\documentclass[runningheads]{llncs}

\usepackage [T1]   {fontenc}
\usepackage [utf8] {inputenc}

\usepackage[
  scaled = .96
]{PTSansNarrow}
\newcommand*{\narrowfont}{\fontfamily{PTSansNarrow-TLF}\selectfont}

\usepackage[
 svgnames,
 table,
]{xcolor}

\usepackage[
 colorlinks,
 urlcolor = NavyBlue,
 citecolor = NavyBlue,
 linkcolor = NavyBlue,
]{hyperref}

\usepackage{
  amsmath,
  amsfonts,
  amssymb,
  cite,
  ebproof,
  float,
  graphicx,
  ifthen,
  kantlipsum,
  listings,
  mathpartir,
  microtype,
  multirow,
  nicefrac,
  stmaryrd,
  textcomp,
  utfsym,
  wrapfig,
  xcolor,
  xparse,
  xspace
}

\usepackage[
  scale = .8
]{noto-mono}

\usepackage[
  inline
]{enumitem}

\usepackage{cleveref}

\crefname{equation}{Eq.}{equations}
\Crefname{equation}{Equation}{Equations}
\crefname{proposition}{Prop.}{propositions}
\Crefname{proposition}{Proposition}{Propositions}
\crefname{lemma}{Lemma}{lemmata}
\Crefname{lemma}{Lemma}{Lemmata}
\crefname{listing}{Listing}{listings}
\Crefname{listing}{Listing}{Listings}
\crefname{definition}{Def.}{definitions}
\Crefname{definition}{Definition}{Definitions}
\crefname{theorem}{Thm.}{theorems}
\Crefname{theorem}{Theorem}{Theorems}
\crefname{figure}{Fig.}{figures}
\Crefname{figure}{Figure}{Figures}
\crefname{page}{p.}{pages}
\Crefname{page}{Page}{Pages}
\crefname{section}{Sect.}{sections}
\Crefname{section}{Section}{Sections}

\usepackage[normalem]{ulem}

\usepackage[draft,silent]{fixme}
\fxsetup{theme=color,mode=multiuser}
\definecolor{fxtarget}{rgb}{0.8000,0.0000,0.0000}
\FXRegisterAuthor{todo}{TODO}{TODO}
\FXRegisterAuthor{ebj}{EBJ}{Einar}
\FXRegisterAuthor{ek}{EK}{Eduard}
\FXRegisterAuthor{aw}{AW}{Andrzej}
\FXRegisterAuthor{rp}{RP}{Ra\'ul}

\colorlet{keywordcolor}{blue!50!black}
\colorlet{commentcolor}{green!60!black}
\colorlet{typecolor}{violet}
 % use with tikz: \begin{scope}[\comment]..\end{scope}
\newcommand{\sourcefont}{\ttfamily\small}
\newcommand{\commentfont}{\slshape\rmfamily\color{commentcolor}}
%\renewcommand{\commentfont}{\slshape\color{black!70}}

%%%%%%%%%%%%%%%%%%%%%%%%%%%%%%%%%%%%%%%%%%%%%%%%%%%%%%%%%%%%%%%%%%%%%%%%%%%%%%%
%% ABS and Java code examples
%%%%%%%%%%%%%%%%%%%%%%%%%%%%%%%%%%%%%%%%%%%%%%%%%%%%%%%%%%%%%%%%%%%%%%%%%%%%%%%

\lstdefinelanguage{ABS}{
        keywords={od,do,assert,old,last,this,new,data,type,def,case,of,local,class,interface,
        extends,implements,if,then,else,await,get,return,skip,while,module,
        import,export,from,to,suspend,delta,adds,modifies,removes,original,productline,
        features,core,corefeatures,optionalfeatures,after,when,product,hasAttribute,
        hasMethod,hasField,hasInterface,uses,root,extension,group,allof,oneof,require,
        stateupdate,object,main,objectupdate,classupdate,fi,
        exclude,original,ifin,ifout,opt,null,%critical,port,rebind,duration,deadline,now,
        newgroup,data,thiscomp,in,joins,leaves,subtypeOf,acquire,except,as,component,Pre,Abs
        },
        keywordstyle=\color{keywordcolor}\bfseries\sffamily,
        % standard types:
        morekeywords=[2]{Unit, Fut,Int, Bool, Rat, List, Set, Pair, Fut, Maybe, String, Triple, Either, Map},
        keywordstyle=[2]\color{typecolor},
        sensitive=true,
        comment=[l]{//},
        morecomment=[s]{/*}{*/},
        morestring=[b]"
}

% \lstset { language = Java,
%           keywordstyle = \rmfamily\bfseries,
%           basicstyle = \ttfamily\small,
%           breaklines = true,
%           breakindent = 3pt,
%           flexiblecolumns = false,
%           stepnumber = 1,
%           numbers = left,
%           numberstyle = \small,
%           numbersep = 3pt,
%           escapechar = \#
% }

\lstdefinestyle{codeabs}{
        basicstyle=\sourcefont\upshape,
        keywordstyle=\color{keywordcolor}\bfseries\sffamily,
        commentstyle=\commentfont,
        columns=fullflexible,
        mathescape=true,
        escapechar={\#},
        keepspaces=true,
        showstringspaces=false,
        %inputencoding=utf8,
        %extendedchars,
        aboveskip=8pt, % default is \medskipamount
        numbers=left,
        stepnumber=1, 
        numberstyle=\ttfamily\scriptsize\color{gray},
        numbersep=4pt,
        xleftmargin=1.5em,
        xrightmargin=1.5em,
        framexleftmargin=1.2em,
        framexrightmargin=1em,
        framextopmargin=0.5ex,
        breaklines=true,
        breakindent=3pt,
}

\lstdefinestyle{abs}{
        style=codeabs,
        language=ABS,
}

\newcommand{\abs}[2][]{\lstinline[style=abs,#1]{#2}}

\lstnewenvironment{abscode}[1][]{
        \minipage{1\linewidth}
        \lstset{style=abs,
        %framerule=1pt,
        backgroundcolor=\color{white},
        rulecolor=\color{gray!50},
        frame=tblr,
        captionpos=b,
        #1}
}{\endminipage}

\renewcommand \UrlFont \sffamily

\def\orcidID#1{\smash{\href{http://orcid.org/#1}{\protect\raisebox{-1.25pt}{\protect\includegraphics{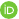}}}}}

%%%%%%
\NewDocumentCommand{\Expectation}%
  {O{\epsilon} D(){absent} m}%
  {\ifthenelse{ \equal {#2} {} }
    {\ensuremath{\mathbf{E}_{#1}{\left({#3}\right)}}}
    {\ensuremath{\mathbf{E}_{#1}{{#3}}}}}

\NewDocumentCommand{\ExpectedV}%
  {O{\epsilon,\pi} D(){absent} m}%
  {\ifthenelse{ \equal {#2} {} }
    {\ensuremath{\mathbb{E}_{#1}{\left({#3}\right)}}}
    {\ensuremath{\mathbb{E}_{#1}{{#3}}}}}

% \newcommand\red[1]{{\color{red}#1}}

%%%%%%
\newcommand{\etal}{\emph{et al.}\xspace}

\renewcommand{\phi}{\varphi}

\newcommand{\embed}[1]{\ensuremath{{[\![#1]\!]}}}
\renewcommand{\epsilon}{\varepsilon}
\newcommand{\overbar}[1]{\mkern 1.5mu\overline{\mkern-1.5mu#1\mkern-1.5mu}\mkern 1.5mu}
\newcommand{\dlbox}[1]{[#1]}

\newcommand{\final}[1]{\textrm{final}(#1)}

\newcommand{\MDP}{\mathcal{M}}
\newcommand{\dom}[1]{\ensuremath{\textnormal{dom}\,#1}}
\newcommand{\PDL}{\textnormal{\narrowfont pDL}\xspace}
\newcommand{\pGCL}{\textnormal{\narrowfont pGCL}\xspace}
\newcommand{\FOL}{\textnormal{\narrowfont FOL}\xspace}
\newcommand{\LIT}{\ensuremath{\textnormal{\narrowfont ATF}}}
\newcommand{\pchoice}[1]{{\,}_{#1}\!\oplus }
\newcommand{\ndchoice}{\ensuremath{\sqcap}}
\newcommand{\pbox}[2]{\ensuremath{[#1]_{#2}}}
\newcommand{\pdleq}[3]{\langle #3 \rangle}

\newcommand{\subst}[2]{\ensuremath{[ #1 := #2 ]}}

\newcommand{\updates}{\mathcal{U}}

\newcommand{\func}[1]{{\boldsymbol{#1}}}
\newcommand{\pval}{\ensuremath{\mathit{p}}}
\newcommand{\pfun}{\ensuremath{\func{p}}}
\newcommand{\pfunepsilon}{\ensuremath{\pfun(\epsilon)}}

\newcommand{\gmid}{~\mid~}
\newcommand{\updatesemantics}[1]{\llbracket{#1}\rrbracket}
\newcommand{\judgment}[2]{{#1}\vdash{#2}}

\newcommand{\ncondrule}[3]{
  \begin{array}{c}
    \textsc{ ({#1})} \\[1pt]
    #2 \\[1pt]
    \hline\\[-7pt]
    #3
  \end{array} }

\newcommand{\State}{\textit{State}\xspace}
\newcommand{\Act}{\textit{Act}}

\lstset{
  basicstyle=\small\ttfamily,
  columns=flexible,
  numbers=left,
  mathescape=true,
  morekeywords={if, then, else, while, skip, assume}
}

\newcommand{\code}[1]{\text{\lstinline[mathescape=true]|#1|}}

\newcommand{\rulename}[1]{\textsc{\scriptsize(#1)}}
\newcommand{\refrule}[1]{\textsc{\footnotesize #1}}
\newcommand{\INFER}[4][]{\frac{\begin{array}{c}\rulename{#2}\\#3
\\[0.5mm]
\end{array}
}{
\begin{array}{c}
\\[-3.5mm]
\displaystyle{#4}
\end{array}
}\;{#1}}

\newcommand{\rulemacro}[3][]{\frac{\begin{array}{c}\\#2
\\[0.5mm]
\end{array}
}{
\begin{array}{c}
\\[-3.5mm]
\displaystyle{#3}
\end{array}
}\;{#1}}

\title{Towards a Proof System for Probabilistic~Dynamic~Logic}
\author{Einar Broch Johnsen \inst{1} \orcidID{0000-0001-5382-3949} \and
Eduard Kamburjan \inst{1} \orcidID{0000-0002-0996-2543} \and
Raul Pardo \inst{2} \orcidID{0000-0003-0003-7295} \and\\
Erik Voogd \inst{1} \orcidID{0009-0007-9712-3224} \and
Andrzej Wąsowski \inst{2} \orcidID{0000-0003-0532-2685}
}
\authorrunning{E.\,B. Johnsen et al.}
\institute{Department of Informatics, University of Oslo, Oslo, Norway
  \\ \email{\{einarj,eduard,erikvoogd\}@ifi.uio.no}
  \and
IT University of Copenhagen, Copenhagen, Denmark\\
\email{\{raup,wasowski\}@itu.dk}}

\begin{document}

\maketitle              

\begin{abstract}
  Whereas the semantics of probabilistic languages has been
  extensively studied, specification languages for their properties
  have received less attention---with the notable exception of recent
  and on-going efforts by Joost-Pieter Katoen and collaborators. In
  this paper, we revisit probabilistic dynamic logic (\PDL), a
  specification logic for programs in the probabilistic guarded
  command language (\pGCL) of McIver and Morgan. Building on dynamic
  logic, \PDL can express both first-order state properties and
  probabilistic reachability properties. In this paper, we report on
  work in progress towards a deductive proof system for \PDL. This
  proof system, in line with verification systems for dynamic logic
  such as KeY, is based on forward reasoning by means of symbolic
  execution.
\end{abstract}

\keywords{Deductive verification \and Probabilistic programs \and
  Dynamic logic}

\section{Introduction}\label{sec:intro}
Joost-Pieter Katoen has pioneered techniques for the verification of
probabilistic systems, including numerous contributions on
model-checking algorithms (e.g.,
\cite{baier99concur,baier03tse,junges24fmsd}), including tools such as
the probabilistic model checker Storm
\cite{dehnert017cav,hensel22sttt}, as well as proof systems for
deductive verification (e.g.,
\cite{batz21popl,batz23tacas,schroer23oopsla,feng23oopsla,batz24popl,mcIver18popl,kaminski18jacm}). Whereas
this line of work is rooted in Hoare logics and weakest precondition
calculi, we here consider a specification language for probabilistic
systems based on dynamic logic \cite{harel00dynlog}: probabilistic
dynamic logic (\PDL for short) \cite{2022-pdl-ictac}.  We believe it
is interesting to study specification languages and deductive
verification based on dynamic logic because dynamic logic is strictly
more expressive than Hoare logic and weakest precondition calculi; in
fact, both can be embedded in dynamic logic \cite{Haehnle22a}. In
contrast to these calculi, dynamic logics are closed under logical
operators such as first-order connectives and quantifiers; for
example, program equivalence, relative to state formulae $\phi$ and
$\psi$, can be expressed by the dynamic logic formula
$\phi \rightarrow \pbox{s_1}{} \psi \iff \phi \rightarrow \pbox{s_2}{}
\psi$. Consequently, specification languages based on dynamic logic,
including \PDL, have classical model-theoretic semantics known from
logics: a satisfaction semantics.

This paper revisits \PDL and its model-theoretic semantics, and its
main contribution is a deductive verification system for \PDL based on
forward reasoning about \PDL judgments, in contrast to the backwards
reasoning used in weakest-precondition-based approaches. To this aim,
we sketch a proof system for \PDL based on symbolic execution rules
that collect constraints about probabilities, and a prototype
implementation of this proof system using
Crowbar~\cite{DBLP:journals/scp/KamburjanSR23}, a modular symbolic
execution engine, and the SMT solver Z3 \cite{demoura2008} to solve
probabilistic constraints.

\section{Motivating Example}\label{sec:example}

As an example of a probabilistic program in \pGCL, consider the
\emph{Monty Hall game}, in which a host presents three doors to a
player. The first door contains a prize and the other doors are
empty. The player needs to decide (or guess) the door behind which the
prize is hidden.  The game proceeds as follows.  First, the location
of the prize is non-deterministically selected by the host.  Then, the
player selects a door.  The host opens an empty door that was not
selected by the player, who is asked whether she would like to switch
doors.  We determine, using \PDL, what option increases the chances of
winning the prize (switching or not).

\begin{figure}[t]
\begin{lstlisting}[mathescape=true,xleftmargin=.25\textwidth]
prize := 0 $\ndchoice$ (prize := 1 $\ndchoice$ prize := 2);
choice := 0 $\pchoice{1/3\;}$ (choice:=1 $\pchoice{1/2\;}$ choice:=2);
if (prize = choice)
    open := (prize+1)%3 $\ndchoice$ open := (prize+2)%3;
else
    open := (2*prize-choice)%3;
if (switch)
    choice := (2*choice-open)%3
else
    skip
  \end{lstlisting}
  \caption{The Monty Hall Program in \pGCL (\lstinline|Monty_Hall|).}\label{code:monty-hall}
  \label{code:monty}
\end{figure}

\Cref{code:monty-hall} shows a \pGCL\ program, \lstinline|Monty_Hall|,
modeling the behavior of host and player.
The program contains four variables: \lstinline|prize| (the door
hiding the prize), \lstinline|choice| (the door selected by the
player), \lstinline|open| (the door opened by the host),
\lstinline|switch| (a Boolean indicating whether the user switches
door in the last step).
Note that the variable \lstinline|switch| is undefined in the program,
and will encode the strategy of the player.
Line 1 models the host's non-deterministic choice of the door for the prize.
Line 2 models the player's choice of door (uniformly over the three doors).
Lines 3--6 model the selection of the door to open, from the doors
that were not selected by the player.
Lines 7--10 model whether the player switches door or not.
For simplicity, we use a slight shortcut to compute the door to open and to
switch in Lines 6 and 8, respectively.
Note that for $x,y \in \{0,1,2\}$, the expression
$z = (2x-y)\;\mathbf{mod}\;3$ simply returns $z \in \{0,1,2\}$ such
that $z \not= x$ and $z\not=y$.
Similarly, the expressions
$y=(x+1)\;\mathbf{mod}\;3, z=(x+2)\;\mathbf{mod}\;3$ in Line~4 ensure
that $y\not=x$, $z\not=x$ and $y \not= z$.
This shortcut computes the doors that the host may open when the
player chose the door with the prize in Line~2.

An example of \PDL specification of the Monty Hall game is the
following formula:
\begin{equation}
  \textit{switch} = \textit{true} \rightarrow [\text{\lstinline|Monty_Hall|}]_\pfun(\textit{choice} = \textit{prize}).
  \label{eq:monty}
\end{equation}
This formula expresses that if the player's strategy is to change door
(i.e., the program's precondition is given by the state formula
$\textit{switch} = \textit{true}$), then the probability of reaching a
state characterized by the postcondition
$\textit{choice} = \textit{prize}$ after successfully executing
\lstinline|Monty_Hall|, is $\pfun$. But what should be the value of
$\pfun$? We will show in \Cref{sec:implementation} that we can
prove this specification for $\pfun = \min(\pfun_0,\pfun_1,\pfun_2)$
where each $\pfun_i$ is the probability for the different locations of
the prize.

\section{Preliminaries}
We briefly introduce the programming language
\pGCL~\cite{mciver05book}, but first we recall Markov Decision
Processes~\cite{baier08book,puterman}, which we use to define the
semantics of \pGCL.

\subsection{Markov Decision Processes}
Markov decision processes are computational structures that feature
both probabilistic and nondeterministic choice. We start from the
following definition (e.g., \cite{baier08book,puterman}):

\begin{definition}[Markov Decision Process]\label{def:mdp}
  A \emph{Markov Decision Process} (MDP) is a tuple $M=(\State,\Act, \mathbf{P})$ where
  \begin{enumerate*}[label=(\roman*)]

    \item $\State$ is a countable set of states,

    \item $\Act$ is a countable set of actions, and

    \item $\mathbf{P}: \State \times \Act \rightarrow \textrm{Dist}(\State)$ is a partial transition probability function.

  \end{enumerate*}
\end{definition}

Let \(\sigma\) denote the states and $a$ the actions of an MDP. A
state \(\sigma\) is \emph{final} if no further transitions are
possible from it, i.e.\
\((\sigma, a) \not \in \textrm{dom} (\mathbf{P})\) for any $a$. A
\emph{path}, denoted \(\overline\sigma\), is a sequence of states
$\sigma_1,\ldots,\sigma_n$ such that $\sigma_n$ is final and there are
actions \(a_1, \ldots, a_{n-1}\) such that
\(\mathbf{P}(\sigma_i,a_i)(\sigma_{i+1})\geq 0\) for $1\leq i <
n$. Let \(\final{\overbar\sigma}\) denote the final state of a path
\(\overbar\sigma\).  To resolve non-deterministic choice, a
\emph{positional policy} \(\pi\) maps states to actions, so
\(\pi: \State \rightarrow \Act\).  Given a policy \(\pi\), we define a
transition relation
$\xrightarrow{\cdot}_\pi \subseteq \State \times [0,1] \times \State$
on states that resolves all the demonic choices in $\mathbf{P}$ and
write
$$
\sigma\xrightarrow{\pval_i}_\pi \sigma' \,\, \text{iff} \,\,
  \mathbf{P}(\sigma,\pi(\sigma)) (\sigma') = \pval_i.
$$
Similarly, the reflexive and transitive closure of the transition
relation,
$\smash{\xrightarrow{\pval}}^\ast_\pi \subseteq \State \times [0,1]
\times \State$, defines the probability of a path as
\begin{equation}
  p = \Pr (\overline\sigma) = 1 \cdot \pval_1 \cdots \pval_n
  \quad \text{ where }  \sigma_1 \xrightarrow{\pval_1}_\pi \cdots \xrightarrow{\pval_n}_\pi \sigma_n .
\end{equation}

Thus, a path with no transitions consists of a single state $\sigma$,
and $\Pr(\sigma) = 1$. Let $\textrm{paths}_\pi(\sigma)$ denote the set
of all paths with policy $\pi$ from $\sigma$ to final states.
\looseness -1

An MDP may have an associated \emph{reward function}
\(r: \State \to [0,1]\) that assigns a real value \(r(\sigma)\) to any
state \( \sigma \in \State \).  (In this paper we assume that rewards
are zero everywhere but in the final states.) We define the
\emph{expectation} of the reward starting in a state \(\sigma\) as the
greatest lower bound on the expected value of the reward over all
policies; so the real valued function defined as \looseness = -1
\begin{equation}
  \Expectation[\sigma]()r = \inf_\pi \ExpectedV[\sigma,\pi]()r = \inf_\pi \sum_{\overline \sigma \in \textrm{paths}_\pi(\sigma)} \!\!\!\!\!\! \Pr (\overline \sigma) \, r ( \final{\overline \sigma} )  \enspace , \label{eq:expectation}
\end{equation}
where \ExpectedV[\sigma, \pi]()r stands for the \emph{expected value}
of the random variable induced by the reward function under the given
policy, known as the \emph{expected reward}.

In this paper, we assume that MDPs (and the programs we derive them
from) arrive at final states with probability 1 under all policies.
This means that the logic \PDL that we will be defining and
interpreting over these MDPs can only talk about properties of almost
surely terminating programs, so in general it cannot be used to reason
about termination without adaptation.

\subsection{pGCL: A Probabilistic Guarded Command Language}
\label{subsec:pgcl}

As programming language, we consider the probabilistic guarded command
language (\pGCL) of McIver and Morgan~\cite{mciver05book}, a core
language which features both probabilistic and non-deterministic
choice. We briefly recall the syntax of \pGCL and it semantics,
formulated as a probabilistic transition system.

\subsubsection{Syntax of  \pGCL.}
Let $X$ be a set of program variables and $x \in X$, the syntax of
\pGCL is defined as follows:
\begin{equation*}
  \begin{array}{lrl}
    v &::= & \textit{true} \mid \textit{false} \mid 0 \mid 1 \mid \ldots \\[.6mm]
    e &::=& v \mid x \mid \textit{op}\ e \mid e\ \textit{op}\ e\\[.6mm]
    op& ::= & +\; \mid \; -\; \mid \; *\; \mid \; /\; \mid \; >\; \mid \; ==\; \mid \; \geq\\[.6mm]
    s &::= & s \ndchoice s \mid s \pchoice{e} s \mid s;s \mid \code{skip} \mid x:= e
             \mid \code{if}\ e\ \{ s \}\ \code{else}\ \{ s \} \mid \code{while}\ e\ \{ s \}
  \end{array}
\end{equation*}
Statements $s$ include the \emph{non-deterministic (or demonic)
  choice} $s_1 \sqcap s_2$ between branches $s_1$ and $s_2$, and
$s \pchoice{e} s'$ for the \emph{probabilistic choice} between $s$ and
$s'$. A non-deterministic program $s_1 \sqcap s_2$ will arbitrarily
select a branch for execution, whereas in a probabilistic program
$s \pchoice{e} s'$, if the expression $e$ evaluates to a value
$\pval \in [0,1]$ given the current values for the program variables,
then $s$ and $s'$ have probability $\pval$ and $1-\pval$ of being
selected, respectively.  Binary operators and the remaining statements
have the usual meaning.

\subsubsection{Semantics of \pGCL.}

The semantics of a \pGCL\ program $s$ can now be defined as an MDP
$\MDP_s$. A state $\sigma$ of $\MDP_s$ is a pair of a \emph{valuation}
and a \emph{program}, so $\sigma=\langle \epsilon, s\rangle$ where the
valuation $\epsilon$ is a mapping from all the program variables in
$s$ to concrete values (we may omit the program from this pair, if it
is unambiguous in the context).  The state
$\langle \epsilon, s\rangle$ represents an \emph{initial state} of the
program $s$ given some initial valuation $\epsilon$ and the state
$\langle\epsilon, \code{skip}\rangle$ represents a \emph{final state}
in which the program has terminated with the valuation $\epsilon$. For
a concrete program, the \emph{policy} $\pi$ is a function that
determines how non-deterministic choice is resolved for a given
valuation of the program variables; i.e.,
$\pi\langle\varepsilon, s_1 \sqcap s_2\rangle= s_i$ for either $i=1$ or
$i=2$.  The rules defining the partial transition probability function
for a given policy $\pi$ are shown in \Cref{fig:semantics}.

\begin{figure}[t]
\centering
$$\begin{array}{c}
\ncondrule{Assign}{\epsilon'=\epsilon[x\mapsto \epsilon(e)]}{\langle \epsilon, x:=e\rangle \xrightarrow{1}_\pi  \langle \epsilon', \code{skip}\rangle}
\qquad
\ncondrule{Composition1}{
        \langle \epsilon, s_2 \rangle\xrightarrow{p}_\pi  \langle \epsilon',s\rangle}{\langle \epsilon, \code{skip}; s_2 \rangle
      \xrightarrow{p}_\pi  \langle \epsilon',s\rangle}
\qquad
\ncondrule{Composition2}{\langle \epsilon, s_1 \rangle \xrightarrow{p}_\pi  \langle \epsilon',s\rangle}{\langle \epsilon, s_1; s_2 \rangle
   \xrightarrow{p}_\pi  \langle \epsilon',s;s_2\rangle}
\\\\
\ncondrule{DemChoice}{ i\in\{1,2\}\\ \pi\langle\epsilon, s_1 \ndchoice  s_2\rangle = s_i}{\langle \epsilon, s_1 \ndchoice s_2\rangle \xrightarrow{1}_\pi  \langle \epsilon',s_i\rangle}
\qquad
\ncondrule{If1}{\epsilon(e)=\textit{true}}{\langle\epsilon, \code{if}\: e\: \{ s_1\}\ \code{else}\ \{ s_2\} \rangle\\  \xrightarrow{1}_\pi \langle\epsilon, s_1\rangle}
\qquad
\ncondrule{If2}{\epsilon(e)=\textit{false}}{\langle\epsilon,
        \code{if}\: e\: \{ s_1\}\ \code{else}\ \{ s_2\} \rangle
      \\\xrightarrow{1}_\pi \langle \epsilon, s_2\rangle}
\\\\
\ncondrule{ProbChoice1}{\epsilon(e)=p \quad 0 \leq p \leq 1}{\langle \epsilon, s_1\pchoice{e} s_2\rangle \xrightarrow{p}_\pi  \langle \epsilon,s_1\rangle}
\qquad
\ncondrule{While1}{\epsilon(e)=\textit{true}}{\langle\epsilon, \code{while}\: e\: \{
        s\}\rangle \xrightarrow{1}_\pi \langle\epsilon, s;\code{while}\: e\: \{
      s\}\rangle}
\\\\
\ncondrule{ProbChoice2}{\epsilon(e)=p \quad 0 \leq p \leq 1}{\langle \epsilon, s_1\pchoice{e} s_2\rangle \xrightarrow{1-p}_\pi  \langle \epsilon,s_2\rangle}
\qquad
\ncondrule{While2}{\epsilon(e)=\textit{false}}{\langle\epsilon, \code{while}\: e\: \{ s\}\rangle \xrightarrow{1}_\pi \langle \epsilon, \code{skip}\rangle}\\[-6pt]
\end{array}$$
\caption{\label{fig:semantics}An MDP-semantics for pGCL.}
\end{figure}

\section{PDL: Probabilistic Dynamic Logic}\label{sec:pdl}
The probabilistic dynamic logic \PDL was introduced by Pardo
\etal~\cite{2022-pdl-ictac} as a specification language for
probabilistic programs in \pGCL. \PDL builds on dynamic logic
\cite{harel00dynlog}, a modal logic in which logical formulae with
programs in the modalities can be used to express reachability
properties. Our formulation of \PDL here differs from our previous
work~\cite{2022-pdl-ictac} by incorporating \emph{symbolic
  updates}~\cite{beckert16dynamic}, a technique for representing state
change in forward symbolic execution that is well-known from the KeY
verification system~\cite{key}.

\subsection{Syntax of \PDL}
Given sets \(X\) of program variables and \(L\) of logical variables
disjoint from \(X\), let \LIT\ denote the well-formed atomic formulae
built using constants, program and logical variables.  For every
\(l\!\in\!L\), let \dom{l}\ denote the domain of \(l\).  Let $x$ range
over $X$ and $t$ over well-formed terms, which for our purposes are
\pGCL expressions where also logical variables are allowed.

The formulae $\phi$ of probabilistic dynamic logic (\PDL) are defined
inductively as the smallest set generated by the following grammar.
\begin{align*}\label{eq:pdl-syntax}
  \phi \quad ::=& \quad \LIT
  \gmid \neg \phi
  \gmid \phi_1\land \phi_2
  \gmid \forall l \cdot \phi
  \gmid \dlbox{s}_\pfun\:\phi
  \gmid \{U\}\phi
  \\
  U \quad ::=& \quad \mathsf{empty} \gmid x~\mapsto~t
\end{align*}
where \(\phi\) ranges over \PDL formulae, \(l\! \in\! L\) over logical
variables, $s$ is a \pGCL program with variables in \(X\), and
$\pfun \colon \State \to [0,1]$ is an expectation assigning values in
$[0,1]$ to initial states of the program \(s\).  The logical operators
\(\to\), \(\lor\) and \(\exists\) are derived in terms of \(\neg\),
\(\wedge\) and \(\forall\) as usual.   As usual, state formulae are \PDL
formulae without the box-modality; we denote state formulae by \FOL.

A formula can be constructed by applying symbolic updates $U$ to
formulae $\phi$; i.e., $\{U\}\phi$ is a well-formed formula. Note that
formulae that include symbolic updates are typically used in
intermediate steps in a proof system based on forward symbolic
execution; symbolic updates are typically not part of user-provided
specifications. Symbolic updates are syntactic representations of term
substitutions for program variables, which keep track of symbolic
state changes within a proof branch. The empty update $\mathsf{empty}$
denotes no change, the update $x~\mapsto~t$ denotes a state change (or
substitution~\cite{key}) where the program variable $x$ has the value
of $t$. The update application $\{U\}\phi$ applies the update $U$ to
$\phi$.

\subsection{Semantics of \PDL}

We extend valuations to also map logical variables \(l\in L\) to
values in \dom{l} and let \(\epsilon\models_\LIT \phi\) denote
standard satisfaction, expressing that \(\phi\in\LIT\) holds in
valuation \(\epsilon\). From now on, we equate valuations and states,
writing, for instance, \(\epsilon \in \State\). Though states consist
not only of valuations, but also of program locations, the locations
are not relevant for interpreting \PDL formulae.

First, we define the semantics of updates as a function from program
state to program state. Let $\epsilon(t)$ be the evaluation of a term
$t$ in a state $\epsilon$.  The updated states
$\updatesemantics{U}(\epsilon)$ for a given substitution $U$ and state
$\epsilon$ are given by
\begin{align*}
\updatesemantics{\mathsf{empty}}(\epsilon) &= \epsilon\\
\updatesemantics{x\mapsto t}(\epsilon) & =\epsilon[x\mapsto \epsilon(t)]
\end{align*}
We define satisfiability of \emph{well-formed formulae} in \PDL\ as follows:
\looseness = -1

\begin{definition}[Satisfiability of \PDL Formulae]%[__
  \label{def:satisfaction}
  Let \(\phi\) be a well-formed \PDL formula, $\pi$ range over
  policies, \(l \! \in \! L\), \(\pfun : \State \to [0,1]\) be an
  expectation lower bound, and $\epsilon$ be a valuation defined for
  all variables mentioned in $\phi$.  The \emph{satisfiability} of a
  formula $\phi$ in a model $\epsilon$, denoted
  $\epsilon\models \phi$, is defined inductively as follows:
  \begin{align*}
    & \epsilon \models \phi
    & \textnormal{iff}\quad
    & \phi\in \LIT \quad\textnormal{and}\quad  \epsilon \models_\LIT \phi \\
    & \epsilon \models \phi_1\land \phi_2
    & \textnormal{iff}\quad
    & \epsilon \models \phi_1 \quad\textnormal{and}\quad \epsilon \models \phi_2
    \\
    & \epsilon \models \neg \phi
    & \textnormal{iff}\quad
    & \textnormal{not } \epsilon \models \phi
    \\
    & \epsilon \models \forall l \cdot  \phi
    & \textnormal{iff}\quad
    & \epsilon \models \phi \subst{l}{v} \textnormal{ for each } v\in \dom{l}
    \\
    & \epsilon \models \{U\}\phi
    & \textnormal{iff}\quad
    & \updatesemantics{U}(\epsilon) \models \phi
    \\
    & \epsilon \models \dlbox{s}_\pfun \phi
    & \textnormal{iff}\quad
    & \pfunepsilon \leq \Expectation{\embed \phi}
    \textnormal{ where the expectation is taken in $\MDP_s$}
  \end{align*}
\end{definition}

\noindent
For $\phi\in \LIT$, \(\models_\LIT\) can be used to check satisfaction
just against the valuation of program variables since \(\phi\) is
well-formed.  In the case of universal quantification, the
substitution replaces logical variables with constants.  The last case
(p-box) is implicitly recursive, since the characteristic function
\embed\phi\ refers to the satisfaction of \(\phi\) in the final states
of \(s\).  We use the characteristic function \embed\phi\ as the
reward function on the final state of $\MDP_s$.  In other words, the
satisfaction of a p-box formula $\pbox{s}{\pfun}\:\phi$ captures a
lower bound on the probability of \(\phi\) holding after the program
\(s\). Consequently, \PDL supports specification and reasoning about
probabilistic reachability properties in almost surely terminating
programs.  We use $\models \pbox{s}{\pfun} \, \phi$ to denote that a
formula is \emph{valid}, \textit{i.e.},
$\epsilon \models \pbox{s}{\pfun}\, \phi $ for all valuations
$\epsilon$.

\begin{proposition}[Properties of \PDL \cite{2022-pdl-ictac}]\label{prop:properties}
  \textnormal{
  Let $s,s_1,s_2$ be \pGCL programs, $\phi$ a \PDL formula, and $\epsilon$ a valuation.
  \begin{enumerate}[label=(\roman*)]
    \item\label{item:termination} (termination) $\epsilon \models \pbox{\code{skip}}{\func{1}} \phi$ if and only if $\epsilon \models \phi$;
    \item\label{item:inaction} (inaction) $\epsilon \models \pbox{s}\pfun \phi$ if and only if $\epsilon \models \pbox{\code{skip};\ s}\pfun \phi$;
    \item\label{item:assign} (assign) $\epsilon \models \pbox{x := e;\ s}\pfun \phi$ if and only if $\epsilon[x \mapsto \epsilon(e)] \models \pbox s \pfun\phi$;
    \item\label{item:unilowb} (universal lower bound) $\epsilon \models \pbox{s}{\func{0}}\phi$;
    \item\label{item:quantweak} (quantitative weakening) if $\epsilon\models \pbox s\pfun\phi$ and $\pfun' \leq \pfun$ then $\epsilon \models \pbox s{\pfun'}\phi$;
    \item\label{item:demchoice} (demonic choice) $\epsilon \models \pbox{s_1}{\pfun}\phi$ and $\epsilon \models \pbox{s_2}{\pfun}\phi$ if and only if $\epsilon \models \pbox{s_1\sqcap s_2}\pfun \phi$;
    \item\label{item:pchoice} (probabilistic choice) if $\epsilon \models \pbox{s_1}{\pfun_1}\phi$ and $\epsilon \models \pbox{s_2}{\pfun_2}\phi$ then $\epsilon \models \pbox{s_1 \pchoice e s_2}\pfun \phi$, where $\pfun = \epsilon(e)\pfun_1 + (1-\epsilon(e))\pfun_2$;
    \item\label{item:iftrue} (if true) if $\epsilon \models e \land \pbox{s_1}\pfun \phi$ then $\epsilon \models \pbox{\code {if}\ (e)\ \{s_1\}\ \code{else}\ \{s_2\}}\pfun \phi$;
    \item\label{item:iffalse} (if false) if $\epsilon \models \neg e \land \pbox{s_2}\pfun \phi$ then $\epsilon \models \pbox{\code {if}\ (e)\ \{s_1\}\ \code{else}\ \{s_2\}}\pfun \phi$; and
    \item\label{item:loopunfold} (loop unfold) $\epsilon \models \pbox{\code {if}\ (e)\ \{s;\ \code{while}\ (e)\ \{s\}\}\ \code{else}\ \{\code{skip}\}}\pfun \phi$ if and only if $\epsilon \models \pbox{\code {while}\ (e)\ \{s\}}\pfun \phi$.
  \end{enumerate}
}
\end{proposition}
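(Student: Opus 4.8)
The plan is to reduce all ten clauses to structural recurrences for a single quantity, the (demonic) weakest pre-expectation. For a program $s$, a state $\epsilon$, and a \PDL formula $\phi$, write $\mathrm{wp}_s(\phi)(\epsilon)$ for $\Expectation{\embed\phi}$ taken in $\MDP_s$; unfolding \Cref{def:satisfaction} and \Cref{eq:expectation}, $\epsilon\models\pbox{s}\pfun\phi$ iff $\pfun(\epsilon)\le\mathrm{wp}_s(\phi)(\epsilon)$. The recurrences to establish, read off from \Cref{fig:semantics} by looking at the possible first transitions, are: $\mathrm{wp}_s(\phi)(\epsilon)=\embed\phi(\epsilon)$ for $s=\code{skip}$; $\mathrm{wp}_s(\phi)(\epsilon)=\mathrm{wp}_{s'}(\phi)(\epsilon)$ for $s=\code{skip};s'$; $\mathrm{wp}_s(\phi)(\epsilon)=\mathrm{wp}_{s'}(\phi)(\epsilon[x\mapsto\epsilon(e)])$ for $s=x:=e;s'$; $\mathrm{wp}_s(\phi)(\epsilon)=\min\!\big(\mathrm{wp}_{s_1}(\phi)(\epsilon),\mathrm{wp}_{s_2}(\phi)(\epsilon)\big)$ for $s=s_1\sqcap s_2$; $\mathrm{wp}_s(\phi)(\epsilon)\ge\epsilon(e)\,\mathrm{wp}_{s_1}(\phi)(\epsilon)+(1-\epsilon(e))\,\mathrm{wp}_{s_2}(\phi)(\epsilon)$ for $s=s_1\pchoice e s_2$; $\mathrm{wp}_s(\phi)(\epsilon)$ equals $\mathrm{wp}_{s_1}(\phi)(\epsilon)$ if $\epsilon(e)=\True$ and $\mathrm{wp}_{s_2}(\phi)(\epsilon)$ if $\epsilon(e)=\False$, for $s$ a conditional on $e$; and $\mathrm{wp}_s(\phi)(\epsilon)=\mathrm{wp}_{s'}(\phi)(\epsilon)$ for $s$ a while loop and $s'$ its one-step unfolding.

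Granting the recurrences, the items are immediate. Items~\ref{item:termination}, \ref{item:inaction}, \ref{item:assign}, and~\ref{item:loopunfold} are the stated equivalences, since the matching recurrence is an equality through which the bound $\pfun(\epsilon)$ passes in both directions --- for~\ref{item:termination} using also $\embed\phi(\epsilon)=1\iff\epsilon\models\phi$ (the definition of $\embed\phi$), and for~\ref{item:assign} reading $\pfun$ as unchanged by the assignment, automatic when the bound is a constant. Item~\ref{item:unilowb} is $\func{0}(\epsilon)=0\le\mathrm{wp}_s(\phi)(\epsilon)$ since $\embed\phi\ge0$; \ref{item:quantweak} is transitivity of $\le$ with $\pfun'(\epsilon)\le\pfun(\epsilon)$. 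Item~\ref{item:demchoice} follows from the $\min$ recurrence via $p\le\min(a,b)\iff(p\le a\wedge p\le b)$ --- here the bound $\pfun$ is \emph{shared} by both branches, so this is an equivalence. Items~\ref{item:iftrue} and~\ref{item:iffalse} are implications only, because their hypothesis also records which branch of the conditional is taken, which the conclusion cannot recover; granting $\epsilon\models e$ (resp.\ $\epsilon\models\neg e$) the conditional's $\mathrm{wp}$ becomes that of $s_1$ (resp.\ $s_2$), so the bound carries over. Item~\ref{item:pchoice} is again only forward: from $\pfun_1(\epsilon)\le\mathrm{wp}_{s_1}(\phi)(\epsilon)$ and $\pfun_2(\epsilon)\le\mathrm{wp}_{s_2}(\phi)(\epsilon)$ the convex combination gives $\pfun(\epsilon)=\epsilon(e)\pfun_1(\epsilon)+(1-\epsilon(e))\pfun_2(\epsilon)\le\mathrm{wp}_{s_1\pchoice e s_2}(\phi)(\epsilon)$, but a too-large $\pfun_1$ can be offset by a small $\pfun_2$, so the left-hand conjunction is strictly stronger.

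The recurrences themselves are the actual work, in two stages. First, for each fixed policy $\pi$, the path set out of a state and the associated path probabilities and final valuations factor through those of the successor state(s) prescribed by \Cref{fig:semantics}; this yields the recurrences at the level of the per-policy expectation $\ExpectedV{\embed\phi}$ --- a one-line computation for the assignment, sequencing, conditional and loop-unfolding rules (one deterministic first step), and for the two choices it just records the first step from \textsc{(DemChoice)}, resp.\ \textsc{(ProbChoice1)}/\textsc{(ProbChoice2)}. Second, one takes the infimum over $\pi$. For the deterministic constructors this preserves the identities verbatim; for demonic choice, $\inf_\pi$ ranges over all ways $\pi$ can resolve the top-level choice and then run a branch, and since --- crucially --- the value of a branch does not depend on how $\pi$ resolves that top-level choice, the infimum separates into $\min_i\mathrm{wp}_{s_i}(\phi)(\epsilon)$. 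This last point is the one step that takes an argument: it amounts to showing that $s_1\sqcap s_2$ never reoccurs as a residual program during the execution of $s_1$ or of $s_2$ (so the choice-point is not revisited), a routine induction on program structure. For probabilistic choice, the infimum of a convex combination is only bounded below by the convex combination of the infima --- which is all item~\ref{item:pchoice} requires. Throughout, the standing almost-sure-termination assumption (with $\embed\phi\in[0,1]$) is what makes $\Expectation{\cdot}$ a bona fide expected value, and the recursion through $\embed\phi$ in \Cref{def:satisfaction} is harmless, $\embed\phi$ being a fixed $\{0,1\}$-valued reward evaluated only at final states.
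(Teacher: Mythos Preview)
The paper does not prove \Cref{prop:properties}; it is quoted from the earlier work~\cite{2022-pdl-ictac} and no argument is given here, so there is no in-paper proof to compare against. Your proof is correct and follows the natural route one expects: recasting $\epsilon\models\pbox{s}\pfun\phi$ as the inequality $\pfun(\epsilon)\le\Expectation{\embed\phi}$, establishing structural recurrences for the right-hand side from the operational rules in \Cref{fig:semantics}, and then reading each item off as an order-theoretic consequence.

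Two of your side remarks deserve to be recorded. Your caveat on item~\ref{item:assign}---that the biconditional as stated silently assumes $\pfun(\epsilon)=\pfun(\epsilon[x\mapsto\epsilon(e)])$---is accurate; the paper uses this property in the soundness argument for \refrule{assign} with a single $\pfun$ on both sides, so the reading ``$\pfun$ constant'' (or at least independent of $x$) is indeed intended. Your observation that the $\min$ identity for demonic choice needs the initial state $\langle\epsilon,s_1\sqcap s_2\rangle$ not to be revisited from within $s_i$ is also the right diagnosis; the ``routine induction'' you allude to is that no SOS rule introduces a fresh $\sqcap$ node, so every $\sqcap$-headed subterm of any residual of $s_i$ is already a subterm of $s_i$, and since $s_i$ is a proper subterm of $s_1\sqcap s_2$ the latter cannot occur among them.
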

Note that in the above, arithmetic operations are lifted point-wise when
applied to expectation lower bounds (functions \pfun, \pfun').

\section{A Proof System for \PDL}\label{sec:rules}

\paragraph{Judgments.}
Let $\Gamma$ be a set of formulae and $\phi$ a singular formula.  We
write a \emph{judgment} $\judgment\Gamma\phi$ to state that the
formula $\bigwedge \Gamma \rightarrow \phi$ is valid, i.e., that
$\epsilon \models \bigwedge \Gamma \rightarrow \phi$ for every
$\epsilon$.  When defining rules, we denote with $\mathcal{U}$ a
nested sequence of update applications of any depth, for example the
schematic formula $\mathcal{U}\phi$ matches
$\{U_1\}\{U_2\}\dots\{U_n\}\phi$ for some natural $n$ and updates
$U_1,\dots,U_n$.

\paragraph{Probabilistic constraints.}
Our proof system is going to work with constraints on probabilities.
Let $\Phi$ be a set of \PDL-formulae, $\mathcal{U}$ a symbolic update
and $\mathit{eq}$ an (in)equality over probability variables.  A
\PDL-constraint
$\pdleq{\Phi}{\mathcal{U}}{\mathit{eq}}^\Phi_\mathcal{U}$ expresses
that $\mathcal{U}(\mathit{eq})$ must hold in any state represented by
the symbolic update $\mathcal{U}$ such that $\mathcal{U}(\Phi)$.  In
the sequel, we introduce the simplifying assumption that that
probabilistic expressions in \PDL do not depend on state variables, in
which case we can simplify constraints to simple (in)equalities, e.g.,
$\pfun \leq \func{1}$.

\paragraph{A probabilistic dynamic logic calculus for pGCL.}
We now introduce the inference rules for dynamic logic formulae of the
form $\Gamma \Rightarrow \mathcal{U}\pbox{ s}{\pfun}\:\phi$,
expressing that if $\Gamma$ holds in the initial state of some program
execution, the probability of reaching a state in which $\phi$ holds
from a state described by the symbolic update $\mathcal{U}$ by
executing $s$, is at least $\pfun$ (as before applied to the initial
state). Thus, if the symbolic update $\mathcal{U}$ is empty, we are in
the initial state and $s$ is the entire program to be analyzed.  The
application of the inference rules creates a proof tree in which
\PDL-constraints are generated as side conditions to rule
applications.  In the end, the proof-tree can be closed if its
\PDL-constraints are satisfiable.

\Cref{fig:new_rules} shows the inference rules for \pGCL---we omit
inference rules for FOL connectives as they are standard,
see~\cite{beckert16dynamic} for details.  The rules in
\Cref{fig:new_rules} are the syntax-driven, aiming to eliminate the
weakening rule.  Instead weakening is implicitly applied in the rules
that add a \PDL-constraint. Let us first consider the rule for
\code{skip}, to explain the rule format. The rules symbolically
execute the first statement of the box-modality, generating a premise
that must hold for the symbolic execution of the remaining program
$s$.  Note that \refrule{skip} does not add any probabilistic
constraint. In the rules \refrule{empty1} and \refrule{empty0}, the
empty program is denoted by a \code{skip}-statement without a
continuation $s$. In these rules, our aim is to check that the
postcondition $\phi$ holds in the current state, represented by the
symbolic update $\mathcal{U}$ with a given probability $\pfun$. If
$\pfun \leq \func{0}$, we leave the proof tree open in rule
\refrule{empty0}. If $\pfun \geq \func{1}$, we close the tree if the
formula holds in rule \refrule{empty1}.

Branching is expressed through probabilistic and demonic choice. Rule
\refrule{demonChoice} captures the demonic choice by taking the worst
case of the two branches, where $\pfun_1$ and $\pfun_2$ capture the
probability of the postcondition $\phi$ holding for each of the
branches. The \PDL-constraint
$\pdleq{\Gamma}{\mathcal{U}}{\pfun \leq \mathbf{min}(\pfun_1,
  \pfun_2)}$ captures this worst-case assumption. Rule
\refrule{probChoice} similarly captures the probabilistic choice, here
the \PDL-constrain combines the probability f selecting a branch with
the probability of reaching the postcondition in that
branch. Conditional branching over an \code{if}-statement is captured
by the rule \refrule{if}, which is standard and simply adds the
condition for selecting each branch in the state captured by the
symbolic update $\mathcal{U}$ to the precondition of each premise.

Loops are captured by rule \refrule{loopUnroll} which simply
unfolds a \code{while}-statement into a conditional statement.

Asserting $\judgment\Gamma\phi$ expresses existence of a proof tree
with root $\judgment{\Gamma}{\phi}$ whose set of side conditions is
satisfiable.
\begin{theorem}[Soundness]\label{thm:soundness}
  For all $\Gamma$ and $\phi$, if $\judgment\Gamma\phi$ then $\models (\bigwedge \Gamma) \rightarrow \phi$.
\end{theorem}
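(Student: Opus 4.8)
The plan is to fix, once and for all, the data witnessing the hypothesis $\judgment{\Gamma}{\phi}$: a proof tree $T$ with root $\judgment{\Gamma}{\phi}$, together with an assignment $\rho$ of the probability metavariables occurring in the \PDL-constraints collected along $T$ that satisfies all of them (under the simplifying assumption adopted in \Cref{sec:rules} these metavariables are state-independent, so $\rho$ picks constants in $[0,1]$). Reading every $\pfun,\pfun_1,\pfun_2$ in $T$ through $\rho$, I would then prove by induction on $T$ the following strengthening: every node of $T$ denotes a valid formula, where a node labelled $\judgment{\Gamma'}{\mathcal{U}\,\pbox{s}{\pfun}\,\phi}$ counts as valid iff $\updatesemantics{\mathcal{U}}(\epsilon)\models\pbox{s}{\pfun}\,\phi$ for every $\epsilon$ with $\epsilon\models\bigwedge\Gamma'$, and a node labelled by a purely first-order judgment $\judgment{\Gamma'}{\mathcal{U}\,\psi}$ is valid iff $\updatesemantics{\mathcal{U}}(\epsilon)\models\psi$ under the same proviso; both unfoldings are just the iterated $\{U\}$-clause of \Cref{def:satisfaction}. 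Applying the strengthening to the root, where $\mathcal{U}$ is empty, gives $\models(\bigwedge\Gamma)\rightarrow\phi$, which is \Cref{thm:soundness}.

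The inductive step is a rule-by-rule argument over \Cref{fig:new_rules}: assuming validity at all premises of a rule instance, deduce validity of its conclusion; and each case is a syntactic re-packaging of the matching clause of \Cref{prop:properties}, read at the current symbolic state $\updatesemantics{\mathcal{U}}(\epsilon)$. In detail, \refrule{skip} is the \emph{(inaction)} clause; the assignment rule is the \emph{(assign)} clause, using that extending $\mathcal{U}$ by $x\mapsto e$ denotes the state $\updatesemantics{\mathcal{U}}(\epsilon)[x\mapsto\updatesemantics{\mathcal{U}}(\epsilon)(e)]$; \refrule{if} is the \emph{(if true)} and \emph{(if false)} clauses, with the guard $\{\mathcal{U}\}e$ (resp.\ $\{\mathcal{U}\}\neg e$) the rule adds to the antecedent of the two premises matching the hypothesis $\epsilon\models e$ (resp.\ $\epsilon\models\neg e$) of those clauses; \refrule{empty1} combines \emph{(termination)} with \emph{(quantitative weakening)}, its constraint $\pfun\geq\func{1}$ together with the range of $\pfun$ forcing $\pfun=\func{1}$ and its first-order premise discharging that $\phi$ holds in the current state; \refrule{empty0} is \emph{(universal lower bound)}, its constraint $\pfun\leq\func{0}$ forcing $\pfun=\func{0}$; \refrule{demonChoice} combines \emph{(demonic choice)} with \emph{(quantitative weakening)} via the constraint $\pfun\leq\mathbf{min}(\pfun_1,\pfun_2)$; \refrule{probChoice} combines \emph{(probabilistic choice)} with \emph{(quantitative weakening)} via the constraint bounding $\pfun$ by $\updatesemantics{\mathcal{U}}(\epsilon)(e)\cdot\pfun_1 + (1-\updatesemantics{\mathcal{U}}(\epsilon)(e))\cdot\pfun_2$ (a numeric inequality under the simplifying assumption); and \refrule{loopUnroll} is the \emph{(loop unfold)} clause, whose premise is literally the unfolded conditional. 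The first-order rules and the closing of branches by propositional and quantifier reasoning are inherited from soundness of the standard dynamic-logic \FOL\ calculus~\cite{beckert16dynamic}, and soundness of update application on \LIT-formulae is exactly the defining equations for $\updatesemantics{\cdot}$.

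The hard part will be that \refrule{loopUnroll} makes $T$ non-well-founded: a branch through $\code{while}\ e\ \{s_0\}$ re-derives, after one unrolling and the symbolic execution of $s_0$, a judgment of the same shape, so there is no base case along that branch and bare induction on the height of $T$ does not go through. I would handle this using the standing assumption that every \pGCL\ program reaches a final state with probability $1$ under all policies: for a loop $s=\code{while}\ e\ \{s_0\}$, the expectation $\Expectation{\embed\phi}$ in $\MDP_s$ is the supremum over $n\in\nats$ of the corresponding expectation of the $n$-fold finite unrolling, in which the innermost residual loop is discharged by \refrule{empty0} with bound $\func{0}$; almost-sure termination makes the probability mass on paths not terminated within $n$ unrollings vanish, so monotone convergence applies (the reward $\embed\phi$ takes values in $\{0,1\}$). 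Each finite unrolling is loop-free, hence covered by the finitary induction above, and the lower bound it certifies passes to the supremum by \emph{(quantitative weakening)}. Turning this into a rigorous argument---in particular reconciling the single assignment $\rho$ with the infinitely many unrollings, and establishing the convergence uniformly in the initial valuation---is the real work; the finitary core, namely the rule-by-rule verification against \Cref{prop:properties}, is routine. A conservative fallback that still covers the Monty Hall example of \Cref{sec:example} is to state \Cref{thm:soundness} only for loop-free $s$, where $T$ is necessarily finite.
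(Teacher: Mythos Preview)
Your overall plan---fix a finite derivation together with a satisfying assignment of the probability metavariables, then prove validity of every node by induction on the tree, case-splitting on the last rule and appealing to \Cref{prop:properties}---is exactly what the paper does. Two points deserve comment.

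First, your ``hard part'' is not actually a difficulty. A derivation witnessing $\judgment{\Gamma}{\phi}$ is by definition a \emph{finite} proof tree; \refrule{loopUnroll} performs a single unrolling per application, so each node still has strictly smaller subderivations above it and induction on height is well-founded. If a loop cannot be discharged in finitely many unrollings, then no derivation exists and the soundness claim is vacuous---this is an incompleteness of the calculus, not a soundness obstacle. The paper accordingly treats \refrule{loopUnroll} as just one more case in the finite induction, with no limit argument and no appeal to almost-sure termination; your proposed supremum-over-unrollings machinery is unnecessary here.

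Second, there is a genuine gap in what you call the ``routine'' finitary core. The rules in \Cref{fig:new_rules} all carry a sequential continuation $;\,s$: the conclusion of \refrule{demonChoice} mentions $(s_1\sqcap s_2);s$, of \refrule{probChoice} $(s_1\pchoice{e}s_2);s$, of \refrule{if} $(\code{if}\ e\ \{s_1\}\ \code{else}\ \{s_2\});s$, and of \refrule{loopUnroll} $\code{while}\ e\ \{s_b\};s$. The corresponding clauses of \Cref{prop:properties} have no such continuation. Instantiating, say, clause~\ref{item:demchoice} with $s_i;s$ in place of $s_i$ yields $\pbox{s_1;s\,\sqcap\,s_2;s}{\pfun}\phi$, which is a different program from $\pbox{(s_1\sqcap s_2);s}{\pfun}\phi$; likewise clause~\ref{item:loopunfold} produces $\code{if}\ e\ \{s_b;\code{while}\ e\ \{s_b\}\}\ \code{else}\ \{\code{skip}\}$ rather than the rule's $\code{if}\ e\ \{s_b;\code{while}\ e\ \{s_b\};s\}\ \code{else}\ \{s\}$. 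The paper closes this gap by dropping to the MDP semantics (\Cref{fig:semantics}) and arguing, via \textsc{Composition2} and the definition of expectation, that the two programs have the same expected reward from the given state. Your description of these cases as a ``syntactic re-packaging'' of \Cref{prop:properties} skips precisely this step; you should either establish a general distributivity lemma for $;s$ over the constructors or reproduce the semantic argument the paper gives.
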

\begin{proof}
  By induction on the height of the proof tree that justifies the
  judgment $\judgment\Gamma\phi$, with an analysis of the rule that
  justifies the root of the proof tree.  All rules except \refrule{if}
  are of the form where the premise is $\judgment \Gamma {\phi_1}$ and
  the conclusion is $\judgment \Gamma {\phi_2}$.  In such cases, it
  suffices to prove that if $\phi_1$ holds then $\phi_2$ holds.
  Throughout the proof, we use
  $\mathcal U = \{U_1\}\{U_2\}\dots\{U_n\}$ for some $n$ and symbolic
  updates $U_1,\dots,U_n$.  Then, using \Cref{def:satisfaction},
  $\epsilon \models \mathcal U \phi$ if and only if
  $(\updatesemantics{U_n}\circ\updatesemantics{U_{n-1}}\circ\dots\circ\updatesemantics{U_1})(\epsilon)
  \models \phi$.  We will write $\rho_{\mathcal U}$ for the state
  transformation
  $\updatesemantics{U_n}\circ\updatesemantics{U_{n-1}}\circ\dots\circ\updatesemantics{U_1}$
  \begin{itemize}
    \item For rule \refrule{empty1} $\epsilon \models \mathcal U (\phi)$ iff $\rho_{\mathcal U} (\epsilon) \models \phi$ iff $\rho_{\mathcal U} (\epsilon) \models \pbox {\code{skip}}{\func{1}} \phi$ iff $\epsilon \models \mathcal U \pbox{\code{skip}}{\func{1}} \phi$.
      Here, we use \cref{prop:properties}\ref{item:termination}.
    \item For rule \refrule{skip}, $\epsilon \models \mathcal U \pbox{s}{\pfun} \phi$ iff $\rho_{\mathcal U}(\epsilon) \models \pbox{s}{\pfun} \phi$ iff $\rho_{\mathcal U} (\epsilon) \models \pbox{\code{skip};s}{\pfun} \phi$ iff $\epsilon \models \mathcal U \pbox{\code{skip};s}{\pfun} \phi$.
      Here, we use \cref{prop:properties}\ref{item:inaction}.
    \item For rule \refrule{assign}, $\epsilon \models \mathcal U \{x\mapsto e\}\pbox s\pfun \phi$ iff $\rho_{\mathcal U}(\epsilon) \models \{x \mapsto e\} \pbox s\pfun \phi$.
      Now write $\epsilon' := \rho_{\mathcal U}(\epsilon)$.
      Then $\epsilon' \models \{x \mapsto e\} \pbox s\pfun \phi$ iff $\epsilon'[x \mapsto \epsilon'(e)] \models \pbox s\pfun \phi$ iff $\epsilon' \models \pbox{x := e; s}p \phi$, using \cref{prop:properties}\ref{item:assign}.
    \item For rule \refrule{empty0}, $p=0$ is a universal lower bound for any formula: $\rho_{\mathcal U}(\epsilon) \models \pbox{\code{skip}}{\func{0}} \phi$, by \cref{prop:properties} (universal lower bound).
    \item For rule \refrule{demonChoice}, let $\epsilon'$ be arbitrary.
      By assumption, $\epsilon' \models \updates\pbox{s_1;s}{\pfun_1} \phi$ and $\epsilon' \models \updates \pbox{s_2;s}{\pfun_2} \phi$.
      That is, with $\epsilon := \rho_{\updates}(\epsilon')$, $\epsilon \models \pbox{s_1;s}{\pfun_1}\phi$ and $\epsilon \models \pbox{s_2;s}{\pfun_2}$.
      Put $\pfun := \min(\pfun_1,\pfun_2)$.
      By \cref{prop:properties}\ref{item:demchoice}, $\epsilon \models \pbox{s_1;s \sqcap s_2;s}\pfun \phi$, meaning $\pfun \leq \Expectation{\embed \phi}$, taken in $\MDP_{s'}$ where $s'=s_1;s\sqcap s_2;s$.
      Analyzing the rules in \Cref{fig:semantics} and using the definition of expectation, this means that both $\pfun \leq \Expectation{\embed \phi}$ taken in $\MDP_{s_1;s}$ and $\pfun \leq \Expectation{\embed \phi}$ taken in $\MDP_{s_2;s}$.
      Now assume for contradiction that $\pfun > \Expectation{\embed \phi}$ taken in $\MDP_{(s_1\sqcap s_2);s}$.
      Then there is a policy $\pi$ such that $\pfun > \ExpectedV{\embed \phi}$ in $\MDP_{(s_1\sqcap s_2);s}$.
      Using Rules~\textsc{DemChoice} and~\textsc{Composition2}, then, depending on the policy but without loss of generality, $\pfun > \ExpectedV{\embed \phi}$ also in $\MDP_{s_1;s}$.
      But this contradicts what we claimed before, that $\pfun \leq \Expectation{\embed \phi}$ in $\MDP_{s_1;s}$.
      Conclude that $\epsilon \models \pbox{(s_1\sqcap s_2);s}\pfun \phi$ and so $\epsilon' \models \updates \pbox{(s_1\sqcap s_2);s}\pfun \phi$.
      By \cref{prop:properties}\ref{item:quantweak}, this is true for any $\pfun \leq \min(\pfun_1,\pfun_2)$, so we are done.
    \item For rule \refrule{probChoice}, let $\epsilon'$ be arbitrary.
      By assumption, $\epsilon' \models \updates\pbox{s_1;s}{\pfun_1}\phi$ and $\epsilon' \models \updates\pbox{s_2;s}{\pfun_2}\phi$, meaning $\epsilon \models \pbox{s_1;s}{\pfun_1}\phi$ and $\epsilon \models \pbox{s_2;s}{\pfun_2}\phi$ with $\epsilon = \rho_\updates(\epsilon')$.
      With $\pfun := e \cdot \pfun_1 + (1-e) \cdot \pfun_2$, by \cref{prop:properties}\ref{item:pchoice}, $\epsilon \models \pbox{s_1;s \pchoice{e} s_2;s}\pfun \phi$.
      This means that $\pfun \leq \Expectation {\embed\phi}$ in the MDP of $s_1;s\pchoice es_2;s$, and analyzing Rules~\textsc{ProbChoice1}, \textsc{ProbChoice2}, and~\textsc{Composition2}, we know that
      $$ \underbrace{\Expectation{\embed \phi}}_{\textup{in }s_1;s \pchoice e s_2;s} = \epsilon(e)\cdot \underbrace{\Expectation{\embed \phi}}_{\textup{in }s_1;s} + (1-\epsilon(e))\cdot \underbrace{\Expectation{\embed \phi}}_{\textup{in }s_2;s} = \underbrace{\Expectation{\embed \phi}}_{\textup{in }(s_1\pchoice es_2);s} $$
      we conclude that also $\epsilon \models \pbox{(s_1\pchoice es_2);s}\pfun\phi$, and hence, $\epsilon' \models \updates \pbox{(s_1\pchoice e s_2);s}\pfun\phi$.
      The argument is generalizable to any $\pfun \leq e \cdot \pfun_1 + (1-e) \cdot \pfun_2$ (\cref{prop:properties}\ref{item:quantweak}), so we are done.
    \item For rule \refrule{if}, let $\epsilon'$ be arbitrary and assume the premises of the rule hold.
      If $\epsilon' \not \models \Gamma$ there is nothing to prove.
      Otherwise, w.l.o.g., $\epsilon' \models \updates(e)$ and therefore $\epsilon' \models \updates \pbox{s_1;s}\pfun \phi$.
      Hence, writing $\epsilon = \rho_\updates(\epsilon)$, we have $\epsilon \models e \land \pbox{s_1;s}\pfun \phi$.
      By \cref{prop:properties}\ref{item:iftrue}, then, $\epsilon \models \pbox{\code{if}\ (e)\{s_1;s\}\ \code{else}\ \{s_2;s\}}\pfun \phi$.
      This means that $\pfun \leq \Expectation{\embed \phi}$ in the MDP of $\code{if}\ (e)\{s_1;s\}\ \code{else}\ \{s_2;s\}$.
      Using Rules~\textsc{If1} and~\textsc{Composition2} and the fact that $\epsilon \models e$, also $\pfun \leq \Expectation{\embed \phi}$ for $s_1;s$.
      Then, using Rule~\textsc{If1} and $\epsilon\models e$ again, $\pfun \leq \Expectation {\embed \phi}$ in $(\code{if}\ (e) \{s_1\}\ \code{else}\ \{s_2\}); s$, so that $\epsilon \models \pbox{(\code{if}\ (e) \{s_1\}\ \code{else}\ \{s_2\}); s}\pfun \phi$, and so $\epsilon' \models \updates\pbox{(\code{if}\ (e) \{s_1\}\ \code{else}\ \{s_2\}); s}\pfun \phi$.
    \item Finally, for rule \refrule{loopUnroll}, let $\epsilon'$ be arbitrary and let $\epsilon = \rho_\updates(\epsilon')$.
      There are two cases to consider:
      \begin{itemize}
        \item If $\epsilon \models e$ (meaning $\epsilon(e)=\textup{true}$) then: $\epsilon \models \pbox{\code {while}\ (e)\ \{s_b\};\ s}\pfun \phi$ iff $\epsilon \models \pbox{s_b;\ \code{while}\ (e)\ \{s_b\};\ s}\pfun \phi$ (using Rules~\textsc{While1} and~\textsc{Composition2} and definition of expectation) iff $\epsilon \models \pbox{\code{if}\ (e)\ \{s_b;\ \code{while}\ (e)\ \{s_b\};\ s\}\ \code{else}\ s}\pfun \phi$ (using Rules~\textsc{If1} and~\textsc{Composition2}).
        \item Otherwise, if $\epsilon \not \models e$, similar reasoning with Rules~\textsc{While2},~\textsc{If2}, and~\textsc{Composition1} show that $\epsilon \models \pbox{\code {while}\ (e)\ \{s_b\};\ s}\pfun \phi$ if and only if $\epsilon \models \pbox{\code{if}\ (e)\ \{s_b;\ \code{while}\ (e)\ \{s_b\};\ s\}\ \code{else}\ s}\pfun \phi$
      \end{itemize}
      In either case, we have shown that $\epsilon' \models \updates \pbox{\code {while}\ (e)\ \{s_b\};\ s}\pfun \phi$ if and only if $\epsilon'\models \updates\pbox{\code{if}\ (e)\ \{s_b;\ \code{while}\ (e)\ \{s_b\};\ s\}\ \code{else}\ s}\pfun \phi$
  \end{itemize}
\end{proof}

\begin{figure}[t]
  \centering
  $\begin{array}{l@{\qquad}l}
     \INFER{skip}{
  \judgment{\Gamma} {\mathcal{U}\pbox{s}{\pfun}~\phi}}{
  \judgment{\Gamma} {\mathcal{U}\pbox{\code{skip};~s}{\pfun}~\phi}}
&

  \INFER[\pdleq{\Gamma}{\mathcal{U}}{\pfun \leq \mathbf{min}(\pfun_1, \pfun_2)}]{demonChoice}{
  \judgment{\Gamma} {\mathcal{U}\pbox{s_1;s~}{\pfun_1}~\phi}\\
     \judgment{\Gamma} { \mathcal{U}\pbox{s_2;~s}{\pfun_2}~\phi}
     }{
     \judgment{\Gamma} {\mathcal{U}\pbox{(s_1 \sqcap s_2);~s}{\pfun}~\phi}
     }
     \\\\
     \INFER{assign}{
     \judgment{\Gamma} {\mathcal{U}\{x\mapsto e\}\pbox{s}{\pfun}~\phi}
     }{
     \judgment{\Gamma} {\mathcal{U}\pbox{x := e ;~s}{\pfun}~\phi}
     }

     &

       \INFER[\pdleq{\Gamma}{\mathcal{U}}{\pfun \leq e\cdot \pfun_1 + (1-e)\cdot \pfun_2}]{probChoice}{
  \judgment{\Gamma} { \mathcal{U}\pbox{s_1;~s}{\pfun_1}~\phi}\\
  \judgment{\Gamma} { \mathcal{U}\pbox{s_2;~s}{\pfun_2}~\phi}
}{
  \judgment{\Gamma} { \mathcal{U}\pbox{s_1 \pchoice{e} s_2;~s}{\pfun}~\phi}
}
     \\\\

   \INFER[\pdleq{\Gamma}{\mathcal{U}}{\pfun \doteq 1}]{empty1}{
  \judgment{\Gamma} {\mathcal{U}(\phi)}
}{
  \judgment{\Gamma} {\mathcal{U}\pbox{\code{skip}}{\pfun}~\phi}
     }

     &

       \INFER{if}{
  \judgment{\Gamma, \mathcal{U}(e)} {\mathcal{U}\pbox{s_1;~s}{\pfun}~\phi}\\
  \judgment{\Gamma, \neg \mathcal{U}(e)} {\mathcal{U}\pbox{s_2;~s}{\pfun}~\phi}
}{
  \judgment{\Gamma} {\mathcal{U}\pbox{\code{if}\ (e)\{s_1\}\ \code{else}\ \{s_2\};~s}{\pfun}~\phi}
}
     \\\\

   \INFER[\pdleq{\Gamma}{\mathcal{U}}{\pfun \doteq 0}]{empty0}{

}{
  \judgment{\Gamma} { \mathcal{U}\pbox{\code{skip}}{\pfun}~\phi}
     }

     &
\INFER{loopUnroll}{
\judgment{\Gamma} {\mathcal{U}\pbox{\code{if}\ (e)\ \{s_b;~\code{while}\ (e)\ \{s_b\};~s\}~\code{else}\ s\}}{\pfun}~\phi}
}{
  \judgment{\Gamma} {\mathcal{U}\pbox{\code{while}\ (e)\ \{s_b\};~s}{\pfun}~\phi}
     }

     \end{array}
$
\caption{Symbolic execution rules.\label{fig:new_rules}}     
\end{figure}

\paragraph{Example.}
  We illustrate the use of the inference rules in~\cref{fig:new_rules}
  by considering one of the branches of the proof tree for
  property~\cref{eq:monty} of the \emph{Monty Hall}
  game~(\cref{sec:example}).  In~\cref{sec:implementation}, we use our
  prototype implementation to automatically generate the complete
  proof.  The property we are interested in proving is as follows (the
  code for \code{Monty_Hall} is in~\cref{code:monty-hall}):
  $$
  \judgment
  {\mathit{switch} = \mathit{true}}
  {[\text{\lstinline|Monty_Hall|}]_\pfun (\mathit{choice} = \mathit{prize})}
  $$
  First, we apply the rule for non-deterministic choice. 
  For convenience, we use $\phi \triangleq (\mathit{choice} = \mathit{prize})$.
  $$
  \rulemacro
  {
    \judgment
    {\mathit{switch} = \mathit{true}}
    {[\text{\lstinline|prize:=0; ...|}]_{\pfun_0} \phi}
    \\
    \judgment
    {\mathit{switch} = \mathit{true}}
    {[\text{\lstinline|prize:=1; ...|}]_{\pfun_1} \phi}
    \\
    \judgment
    {\mathit{switch} = \mathit{true}}
    {[\text{\lstinline|prize:=2; ...|}]_{\pfun_2} \phi}
  }
  {
    \judgment
    {\mathit{switch} = \mathit{true}}
    {[\text{\lstinline|Monty_Hall|}]_\pfun \phi}
  }
  {
    \pfun \leq \min(\pfun_0,\pfun_1,\pfun_2)
  }
  $$
  The rule adds 3 different premises, one for each path of the non-deterministic choice.
  In what follows, we focus on the proof branch for $\pfun_0$.
  The next program statement is an assignment, thus we apply the rule {\sc Assign}.
  $$
  \rulemacro
  {
    \judgment
    {\mathit{switch} = \mathit{true}}
    {\{\mathit{prize} \mapsto 0\}
      [\text{\lstinline|choice:=0|} \pchoice{1/3} \text{\lstinline|(choice:=1|} \pchoice{1/2} \text{\lstinline|choice:=2|}\text{\lstinline|);...|}]_{\pfun_0} 
      \phi}
  }
  {
    \judgment
    {\mathit{switch} = \mathit{true}}
    {[\text{\lstinline|prize:=0; ...|}]_{\pfun_0} \phi}
  }
  $$ 
  This rule simply added the update $\{\mathit{prize} \mapsto 0\}$.
  In the following, we use $\Gamma$ and $\updates$ to denote, at a point in the proof tree, the set of formulae in the precedent of a judgment and the set of updates, respectively.
  Next we apply the rule to resolve the probabilistic choice:
  \\
  $$
  \rulemacro
  {
    \judgment
    {\Gamma}
    {\ \updates[\text{\lstinline|choice:=0; ...|}]_{\pfun_{00}}\phi}    
    \\
    \judgment
    {\Gamma}
    {\ \updates[\text{\lstinline|(choice:=1|} \pchoice{1/2} \text{\lstinline|choice:=2|}\text{\lstinline|);...|}]_{\pfun_{01}} \phi}
  }
  {
    \judgment
    {\Gamma}
    {\ \updates[\text{\lstinline|choice:=0|} \pchoice{1/3} \text{\lstinline|(choice:=1|} \pchoice{1/2} \text{\lstinline|choice:=2|}\text{\lstinline|);...|}]_{\pfun_0} \phi}
  }
  {
    \pfun_0 \leq 1/3\pfun_{00} + 2/3\pfun_{01}
  }
  $$
  We continue the example with the proof tree for the $\pfun_{00}$ premise.
  We apply {\sc Assign} so that we add the assignment as a symbolic update $\updates = \{\mathit{prize} \mapsto 0, \mathit{choice} \mapsto 0\}$.
  Then, we apply the rule for if-statements:
  $$
  \rulemacro
  {
    \judgment
    {\Gamma, \overbrace{\updates(\code{prize = choice})}^{0=0}}
    {[s_0 \code{;...}]_{\pfun_{00}}\phi}
    \quad
    \judgment
    {\Gamma, \overbrace{\neg \, \updates(\code{prize = choice})}^{0\not=0}}
    {[s_1 \code{;...}]_{\pfun_{00}}\phi}
  }
  {
    \judgment
    {\Gamma}
    {[\text{\lstinline[mathescape=true]|if (prize = choice) \{$s_0$\} else \{$s_1$\};...|}]_{\pfun_{00}}\phi}
  }
  {}
  $$
  The branch for the right premise above can be closed as we have derived false ($0\not=0$).
  For the left premise, we apply the rule for resolving the non-deterministic choice $[$\lstinline[mathescape=true]|open := (prize+1)%3 $\ndchoice$ open := (prize+1)%3; ...|$]_{\pfun_{00}}$.
  This step generates a new constraint $\pfun_{00} \leq \min(\pfun_{000}, \pfun_{001})$.
  We continue discussing the left branch (\lstinline[mathescape=true]|open := (prize+1)%3|).
  In this case, we include a new update $\{\mathit{open} \mapsto 1\}$, as $\updates((\mathit{prize}+1) \% 3) = (0+1) \% 3 = 1$.
  Since we have $\mathit{switch} = \mathit{true}$, for the final if-statement (\lstinline[mathescape=true]|if (switch) choice := (2*choice - open)%3 else skip|), the right premise ($\neg \, \updates(e)$) of {\sc If} can be trivially closed, and the left premise produces a final assignment $\{\mathit{choice} \mapsto 2\}$---as $\updates((2\mathit{choice}-\mathit{open}) \% 3) = (2\cdot0-1) \% 3 = 2$.
  All in all, we get $\judgment{\Gamma}{\{ \mathit{prize} \mapsto 0, \mathit{open} \mapsto 1, \mathit{choice} \mapsto 2 \} (\mathit{prize} = \mathit{choice})}$.
  Since, given this update, the property does not hold, we close the branch by applying {\sc Empty0}, i.e., $\pfun_{000} = 0$.

  In summary, this proof branch has produced the following set of constraints: $\pfun \leq \min(\pfun_0,\pfun_1,\pfun_2)$, $\pfun_0 \leq 1/3\pfun_{00} + 2/3\pfun_{01}$, $\pfun_{00} \leq \min(\pfun_{000}, \pfun_{001})$ and $\pfun_{000} = 0$.
  Similar reasoning can be applied in the remaining branches, which will produce additional sets of constraints. 
  In~\cref{sec:implementation}, we use Crowbar to automatically generate the complete proof and constraints.
  Furthermore, Crowbar outsources the set of constraints to an SMT solver to determine whether there are values for the different $\pfun_i$ that satisfy all constraints.

\paragraph{Discussion.}

The above proof system exploits the results of our earlier work
\cite{2022-pdl-ictac} to build an automated deduction system.  In this
system, we have placed a rather simple instrument for reasoning about
loops. We recall that a classic, non-probabilistic loop-invariant rule
for partial correctness has the following structure (cast in a dynamic
logic flavor \cite{key}):

\begin{equation*}
  \INFER{loopInvariant}{
    \judgment{\Gamma} {\mathcal U~I} \quad
    \judgment{\Gamma, \mathcal{U'} (\neg e \land I)}{\mathcal U' \pbox{s}{}~\phi} \quad
    \judgment{\Gamma, \mathcal{U'} (e \land I)}{\mathcal{U'} \pbox{s_b}{}~I)}
  }{
    \judgment{\Gamma} {\mathcal U\pbox{\code{while}\ (e)\ \{s_b\};~s}{}~\phi}
  }
\end{equation*}
where $\mathcal{U'}$ denotes some symbolic update such that the
invariant holds (thus abstracting from the number of iterations of
the loop).
Using our syntax (and exploiting the probability lower bound of one to
encode a qualitative box formula), this rule corresponds to:
\begin{equation*}
  \INFER{loopQual}{
    \judgment{\Gamma} {\mathcal U~I} \quad
    \judgment{\Gamma, \mathcal U' (\neg e \land I)}{ \mathcal U' \pbox{s}{\func{1}}~\phi} \quad
    \judgment{\Gamma,\mathcal U' (e \land I)}{\mathcal U' \pbox{s_b}{\func{1}}~I)}
  }{
    \judgment{\Gamma} {\mathcal U\pbox{\code{while}\ (e)\ \{s_b\};~s}{\func{1}}~\phi}
  }
\end{equation*}
The key aspect is to invent both an invariant and the substitution at
termination, which (roughly) correspond to proposing a loop invariant
in Hoare logics. This rule generalizes in a straightforward way to the
case when the loop body itself is non-probabilistic. Then, it suffices
to show that the expectation of the loop successor \( s \) is
preserved by the loop, obtaining:
\begin{equation*}
  \INFER{loopPreserving}{
    \judgment{\Gamma} {\mathcal U~I} \quad
    \judgment{\Gamma, \mathcal U' (\neg e \land I)}{\mathcal U' \pbox{s}{\pfun}~\phi} \quad
    \judgment{\Gamma, \mathcal U' (e \land I)} {\mathcal U' \pbox{s_b}{\func{1}}~I)}
  }{
    \judgment{\Gamma} {\mathcal U\pbox{\code{while}\ (e)\ \{s_b\};~s}{\pfun}~\phi}
  }
\end{equation*}
In general, weakest-precondition style rules can be encoded, and as
our next step, we intend to experiment with existing rules from that
space in actual proofs\,\cite{kaminski19phd}.

\section{Implementation}\label{sec:implementation}
We have implemented the above proof procedure\footnote{The sources for
  our prototype are available at
  \url{https://github.com/Edkamb/crowbar-tool/tree/PDL}.} in
Crowbar~\cite{DBLP:journals/scp/KamburjanSR23}, a modular symbolic
execution engine which uses the SMT-solver Z3~\cite{demoura2008} under
the hood to solve arithmetic constraints: Crowbar performs the
symbolic execution and uses Z3 to discharge formulas without
modalities. Crowbar was originally developed to experiment with
deductive proof systems for the active object modeling language
ABS~\cite{johnsen10fmco}. For rapid prototyping, we have therefore
used the Crowbar front-end for ABS and encoded \pGCL into the main
block of ABS programs; we use the general annotation
format~\cite{DBLP:journals/scp/SchlatteJKT22} for ABS programs to
capture probabilistic aspects of \PDL as annotations of ABS
programs. The probabilistic constraints accumulated during the
symbolic execution of the \pGCL program are then passed to Z3 if the
proof can otherwise be closed.

The main block itself is annotated with two \emph{specification
  elements}: first the annotation \abs{[Spec: Ensures(e)]} captures
the post-condition and the annotation \abs{[Spec: Prob(e)]} captures
its probability.  Furthermore, we use annotations of statements to
encode probabilistic and demonic choice-operators of \pGCL using an
ABS branching statement.  Demonic choice is expressed with a
\mbox{\abs{[Spec: Demonic]}} annotation on the branching statement,
whose guard is then ignored. Probabilistic choice is similarly
expressed with a \abs{[Spec: Prob(e)]} annotation.

\begin{figure}[t]
\begin{abscode}
[Spec: Ensures(choice == prize)]
[Spec: Prob(2/3)]
{
Int prize = -1; Int choice = -1; Int open = -1;
Int sw = 1;

[Spec: Demonic]
if( True ) { prize = 0; }
else { [Spec: Demonic] if( True ) { prize = 1; }
                      else { prize = 2; } }

[Spec: Prob(1/3)]
if( True ) { choice = 0; }
else { [Spec: Prob(1/2)] if( True ) { choice = 1; }
                        else { choice = 2; } }

if(prize==choice){
    [Spec: Demonic] if( True ) { open = (prize + 1 ) % 3; }
                    else { open = (prize + 2) % 3; }}
else {open = ((2 * prize) - choice) % 3;}

if(sw==1){choice = ((2 * choice) - open) % 3;} else { skip; }
}
\end{abscode}
    \caption{The Monty Hall game in the ABS encoding for Crowbar.}
    \label{fig:monty}
\end{figure}

To illustrate the encoding of \pGCL and \PDL into the ABS
representation of Crowbar, \Cref{fig:monty} shows the encoding of the
Monty Hall game introduced in \Cref{sec:example}. If
\texttt{sw} is set to 1, denoting that the player always switches, the
proof obligation can be discharged, as expected. If \texttt{sw} is set
to 0, denoting that the player never switches, the proof attempt
fails.

\section{Related and Future Work}
The symbolic execution proof system for probabilistic dynamic logic
presented in this paper formulated in the style of KeY \cite{key}, and
its symbolic execution proof system for sequential Java. We
specifically use their technique for symbolic updates
\cite{beckert16dynamic} in the formulation of our symbolic execution
proof system. In contrast to their work, our deductive verification
system addresses probabilistic programs by accumulating probabilistic
constraints, which are then resolved using Z3.

Voogd \etal \cite{voogd23qest} developed a symbolic execution
framework for probabilistic programs, building on Kozen's work on the
semantics of probabilistic programs with random variables
\cite{kozen79}. However, this work, inspired by de Boer
and Bonsangue's work on formalized symbolic execution
\cite{deboer2021symbolic}, does not extend into a deductive proof
system as discussed in this paper. It further focuses on probabilistic
programs with random variables rather than probabilistic and demonic
choice as introduced in \pGCL. In future work, we aim to integrate
ideas from this paper into our reasoning framework; in particular, it
would be interesting to add support for observe-statements. We refer
to Voogd \etal \cite{voogd23qest} for a detailed discussion of how
different semantics for probabilistic programs relate to symbolic
execution.

Katoen \etal \cite{schroer23oopsla} developed a generic deductive
proof system for probabilistic programs. Their work uses an encoding
into a dedicated verification engine for probabilistic reasoning,
called Caesar. In line with systems like Viper~\cite{muller16vmcai},
they develop a intermediate representation language. To address
probabilistic reasoning, their intermediate language is probabilistic
and generates verification conditions for an SMT solver. Our work uses
Crowbar, which is also a modular verification system.  However, Caesar
is tailored for weakest-precondition-style backwards reasoning, while
our work targets forward reasoning by means of symbolic execution.

Our investigation shows that invariant based reasoning for loops is
surprisingly subtle for probabilistic programs. Here, Joost-Pieter
Katoen and his collaborators have again led the way (e.g.,
\cite{mcIver18popl}). We hope some of his work on probabilistic loops
can carry over to backwards reasoning in a weakest pre-expectation
framework to forward reasoning in a symbolic execution setting, but it
is an open question today---an open question that we aim to
understand---exactly how forward and backwards reasoning about
probabilistic loops relate.

\section{Conclusion}
This paper reports on work in progress towards a proof system for
\PDL, a probabilistic dynamic logic for specifying properties about
probabilistic programs. A nice feature of \PDL is that it is closed
under logical operators such as first-order connectives and
quantifiers, and that it has a model-theoretic semantics in terms of
a satisfiability-relation. Our approach to a proof system for \PDL
uses forward reasoning by combining symbolic execution with a
constraint solver. We have outlined a proof system for deductive
verification based on symbolic execution which collects constraints
about probabilities as side conditions. These are then forwarded to
the constraint solver.  For our prototype implementation, we have used
Crowbar, a modular deductive verification engine based on symbolic
execution, and the SMT-solver Z3 to solve constraints on
probabilities.

We would like to end this paper by thanking Joost-Pieter Katoen not
only for all his outstanding work on deductive verification of
probabilistic programs, but also for interesting discussions on the
relationship between backwards and forwards reasoning in this
setting. A next step for us will be to understand if (and hopefully,
how) some of the work conducted by Katoen and colleagues on reasoning
about loops can adapted to a forward-reasoning framework.

\medskip

\noindent
\textbf{Acknowledgments.}
This work was partly funded by the EU project \textsf{SM4RTENANCE}
(grant no.~101123423). We are grateful to Asmae Heydari Tabar for
discussions on symbolic execution and Crowbar.

\end{document}